\newcommand\lw[1]{\hbox{}_{#1}\!}
\newcommand\CC{{\mathbb C}}
\newcommand\SU{{\mathrm{SU}}}
\newcommand\U{{\mathrm{U}}}
\newcommand\ZZ{{\mathbb Z}}
\newcommand\NN{{\mathbb N}}
\theoremstyle{plain}
\newtheorem{thm}{Theorem}[section]
\theoremstyle{definition}
\newtheorem{defn}[thm]{Definition}
\theoremstyle{remark}
\newtheorem{remark}[thm]{Remark}
\begin{document}

\title{The generalized matrix valued hypergeometric equation}

\author{P. Rom\'an}
\address{CIEM-FaMAF, Universidad Nacional de C\'or\-do\-ba, Medina Allende s/n
Cidudad Universitaria, C\'or\-do\-ba~5000,
Argentina, FAX:+54 351 4334054}
\email{roman@mate.uncor.edu}

\author{S. Simondi}
\address{CIEM-FaMAF, Universidad Nacional de C\'or\-do\-ba, Medina Allende s/n
Cidudad Universitaria, C\'or\-do\-ba~5000, Argentina, FAX:+54 351 4334054}
\email{simondi@mate.uncor.edu}

\begin{abstract}
The matrix valued analog of the Euler's hypergeometric differential equation
was introduced by Tirao in \cite{T2}. This equation arises in the study of
matrix valued spherical functions
and in the theory of matrix valued orthogonal polynomials. The goal of this
paper is to extend naturally the number of parameters of Tirao's equation in
order to get a generalized matrix valued hypergeometric equation. We take
advantage of the tools and strategies developed in \cite{T2} to identify the
corresponding matrix hypergeometric functions ${}_nF_m$. We prove that, if
$n=m+1$, this functions are analytic for $|z|<1$ and we give a necesary
condition for the convergence on the unit circle $|z|=1$.
\end{abstract}

\keywords{Hypergeometric function , matrix spherical functions, matrix
orthogonal polynomials, fundamental set of solutions}

\subjclass{33C45, 22E46}

\maketitle

\bibliographystyle{unsrt}

\section{Introduction}
The importance of the hypergeometric differential equation introduced by
Euler in 1769 and the hypergeometric function was perceived by famous
mathematicians as Gauss, Kummer and Riemman. Since then, many others found generalizations and
applications. In fact many of the special functions that appear in mathematical
physics, engineering and statistics are special cases of hypergeometric
functions. Every second-order ordinary differential equation with three regular
singular
points, by placing the singularities in at $0$, $1$, and $\infty$, can be
transformed into the hypergeometric differential equation
$$z(1-z)f''(z)+(c-z(a+b+1))f'(z)-abf(z)=0,$$
where $a$, $b$ and $c$ are supposed to be complex numbers. If $c$ is not an
integer we can verify by a direct differentiation of the series that the unique
solution of this equation, analytic and with value $1$ at $z=0$, is given by
the hypergeometric function defined by
$$\lw{2}F_1 \left( \begin{smallmatrix} a \, , \, b \\ c \end{smallmatrix}
;z\right)=\sum_{n\geq 0} \frac{(a)_n (b)_n}{(c)_n}\frac{z^n}{n!}, \text{ where } (w)_n=w(w+1)\ldots(w+n-1).$$
Recently, Tirao introduced in \cite{T2} the matrix valued analog of the
hypergeometric differential equation
\begin{align}
\label{hipergeometrica_Tirao}
z(1-z)F''(z)+(C-z(A+B+1))F'(z)-ABF(z)=0,
\end{align}
where $A, B, C\in \CC^{r\times r}$ and $F$ denotes a
function on $\CC$ with values in $\CC^r$. The corresponding matrix analog of the
Gauss' hypergeometric function is given by
$$\lw{2}F_1 \left( \begin{smallmatrix} A \, ; \, B \\ C \end{smallmatrix}
;z\right)=\sum_{n\geq 0} \left(\begin{smallmatrix} A \, ; \, B \\ C
\end{smallmatrix}\right)_n \frac{z^n}{n!},$$
where $\left(\begin{smallmatrix} A \, ; \, B \\ C
\end{smallmatrix}\right)_0$ is the identity matrix and
$$\left(\begin{smallmatrix} A \, ; \, B \\ C
\end{smallmatrix}\right)_{n+1}=(C+n)^{-1}(A+n)(B+n)\left(\begin{smallmatrix} A
\, ; \, B \\ C \end{smallmatrix}\right)_{n},\text{ for all }n\geq 0.$$
The matrix valued hypergeometric functions have connections with
the theory of matrix valued spherical functions and with the theory of matrix
valued orthogonal polynomials (\cite{RT},\cite{PT1},\cite{PR}). In fact in \cite{RT} we prove
that the matrix representations of irreducible spherical functions associated to
the dual symmetric pairs $(\SU(2,1),\U(2))$ and $(\SU(3),\U(2))$ are given in terms of matrix
hypergeometric functions.

The goal of this paper is to generalize the matrix valued hypergeometric 
equation \eqref{hipergeometrica_Tirao} by extending naturally the number of
parameters as it was done in the scalar case at the end of the nineteenth century. In
this way we shall consider the following generalized matrix hypergeometric
equation
\begin{multline}
\label{hyperg_generalizada}
z\frac{\operatorname{d}}{\operatorname{d}z}\left(z\frac{\operatorname{d}}{\operatorname{d}z}+B_1-1\right)
\left(z\frac{\operatorname{d}}{\operatorname{d}z}+B_2-1\right)\ldots\left(z\frac{\operatorname{d}}
{\operatorname{d}z}+B_m-1\right)F(z)\\
-z\left(z\frac{\operatorname{d}}{\operatorname{d}z}+A_1\right)\left(z\frac{\operatorname{d}}{\operatorname{d}z}+A_2\right)
\ldots\left(z\frac{\operatorname{d}}{\operatorname{d}z}+A_n\right)F(z)=0,
\end{multline}
where $n,m\in \ZZ_{\geq0}$, $A_1,\ldots,A_n,B_1,\ldots,B_m \in \CC^{r\times r}$
and $F$ denotes a complex function with values in $\CC^r$. Observe that if we set $n=2$
and $m=1$ we obtain the Tirao's hypergeometric equation.

If no eigenvalue of $B_1,\ldots,B_m$ is in the set $\{0,-1,-2,\ldots\}$ we introduce the
generalized matrix valued hypergeometric function in the following way
\begin{equation*}
{}_n F_m\left( \begin{smallmatrix} A_1;\,\,\ldots\,\, ;A_n\\ B_1;\,\ldots \,;B_m
\end{smallmatrix};z\right)=\sum_{j=0}^\infty \frac{z^j}{j!} \left( \begin{smallmatrix} A_1;\,\,\ldots\,\, ;A_n\\ B_1;\,\ldots \,;B_m
\end{smallmatrix}\right)_j,
\end{equation*}
where the symbol $\left( \begin{smallmatrix} A_1;\,\,\ldots\,\, ;A_n\\ B_1;\,\ldots \,;B_m
\end{smallmatrix}\right)_j$ is defined in \eqref{simbologeneralizado}. We prove
that this functions are solutions of the differential equation
\eqref{hyperg_generalizada}. Furthermore if $n=m+1$
and $\{V_j\}_{j=1}^r$ is a basis of $\CC^r$ then the set
$$\left\{{}_{m+1} F_m\left( \begin{smallmatrix}
A_1;\,\,\ldots\,\, ;A_{m+1}\\ B_1;\,\ldots \,;B_m \end{smallmatrix};z\right) V_j\right\}_{j=1}^r,$$
is basis of the space of all solutions
of the hypergeometric equation \eqref{hyperg_generalizada} analytic at $z=0$ for
$|z|<1$.

Throughout this paper for any matrix $A\in \CC^{r\times r}$ we denote by
$\sigma(A)$ to the set of all eigenvalues of $A$ and
$$\rho(A)=max\{Re(\lambda):\lambda\in \sigma(A)\},\,\,\,\,\, \delta(A)=min\{Re(\lambda):\lambda\in
\sigma(A)\}.$$
For any $A\in \CC^{r\times r}$, we shall consider the spectral
norm $\|\bullet\|$ defined by
$$\|A\|=\max \{\sqrt{\lambda}:\lambda\text{ is an eigenvalue of }AA^*\}.$$
Let us assume that $(\sigma(B_1)\cup\sigma(B_2)\cup\ldots\cup
\sigma(B_m))\cap(-\NN_0)=\emptyset$ and that $B_1,\ldots,B_{m}$ are unitarily
equivalent to a diagonalizable matrix with real eigenvalues. If
$$\delta(B_1)+\ldots+\delta(B_{m})\geq \|A_1\|+\ldots+\|A_{m+1}\|,$$
in Theorem \ref{convergencia_en_uno} we prove that ${}_{m+1} F_{m}\left(
\begin{smallmatrix} A_1;\,\,\ldots\,\, ;A_{m+1}\\ B_1;\,\ldots \,;B_{m}
\end{smallmatrix};z\right)$ is absolutely convergent on the unit circle $|z|=1$.

\section{On the generalized matrix valued hypergeometric equation}

The aim of this section is to give a generalization of the matrix valued
hypergeometric equation to the case of an arbitrary number of parameters. Let
$A_1,\ldots,A_n,B_1,\ldots,B_m\in\CC^{r\times r}$. We begin by considering the
following matrix valued differential equation of degree $\max(n,m)$.
\begin{multline}
\label{hyperg_generalizada2}
z\frac{\operatorname{d}}{\operatorname{d}z}\left(z\frac{\operatorname{d}}{\operatorname{d}z}+B_1-1\right)
\left(z\frac{\operatorname{d}}{\operatorname{d}z}+B_2-1\right)\ldots\left(z\frac{\operatorname{d}}
{\operatorname{d}z}+B_m-1\right)F(z)\\
-z\left(z\frac{\operatorname{d}}{\operatorname{d}z}+A_1\right)\left(z\frac{\operatorname{d}}{\operatorname{d}z}+A_2\right)
\ldots\left(z\frac{\operatorname{d}}{\operatorname{d}z}+A_n\right)F(z)=0.
\end{multline}
Let us denote by $\Delta$ to the differential operator
$z\frac{\operatorname{d}}{\operatorname{d}z}$. Then, in terms of
such operator, the matrix valued differential equation
\eqref{hyperg_generalizada2} is given by
\begin{multline}
\label{ecuacioncondeltas}
\Delta(\Delta+B_1-1)(\Delta+B_2-1)\ldots(\Delta+B_m-1)F(z)\\
=z(\Delta+A_1)(\Delta+A_2)\ldots(\Delta+A_n)F(z).
\end{multline}
We will seek for solutions which have the following form
$$F(z)=z^p\sum_{j=0}^{\infty}F_jz^j,$$
where $F_j\in \CC^r$. First at all we observe that
\begin{align*}
\Delta F=z\frac{\operatorname{d}}{\operatorname{d}z}\sum_{j=0}^\infty F_j
z^{p+j}=\sum_{j=0}^\infty(p+j)F_jz^{p+j},
\end{align*}
and therefore we have
\begin{multline*}(\Delta+A_i) F=\sum_{j=0}^\infty(A_i+p+j)F_jz^{p+j},\quad 
(\Delta+B_k-1) F=\sum_{j=0}^\infty(B_k+p+j-1)F_jz^{p+j},
\end{multline*}
for all $1\leq i \leq n$ and $0\leq k\leq m$. If we substitute the
previous results in the differential equation
\eqref{ecuacioncondeltas} we obtain the following recursion relation
for the $\CC^{r}$-valued coefficients $F_j$ of $F(z)$
\begin{multline} \label{200}
(p+j+1)(B_1+p+j)\ldots (B_m+p+j)F_{j+1} \\
=(A_1+p+j)\ldots(A_n+p+j)F_j.
\end{multline}
If we set $j=-1$ in the recursion relation we obtain
\begin{equation}
\label{indicial}
p(B_1+p-1)(B_2+p-1)\ldots(B_m+p-1)F_0=0.
\end{equation}
Therefore we have the following indicial equation
\begin{align}
\label{indicial_determinante}
p^r\det(B_1+p-1)\det(B_2+p-1)\ldots\det(B_m+p-1)=0.
\end{align}
Let $\beta^j_1,\ldots,\beta^j_r$ be the eigenvalues of $B_j$. Then the roots of
the indicial equation are given by
$$p=0,1-\beta^1_1,\ldots,1-\beta^1_r,1-\beta^2_1,\ldots,1-\beta^2_r,\ldots,1-\beta^m_1,\ldots,1-\beta^m_r.$$
The case $p=0$ correspond to the analytic solutions of the
differential equation \eqref{hyperg_generalizada2}.

First at all we will describe the analytc solutions of the equation
\eqref{hyperg_generalizada2}. Let $\left( \begin{smallmatrix} A_1;\,\,\ldots\,\, ;A_n\\ B_1;\,\ldots \,;B_m
\end{smallmatrix}\right)_j$ be the symbol defined recursively by 
\begin{equation}
\label{simbologeneralizado}
\begin{split}
\left( \begin{smallmatrix} A_1;\,\,\ldots\,\, ;A_n\\
B_1;\,\ldots \,;B_m \end{smallmatrix}\right)_0 &= I,\\
\left( \begin{smallmatrix} A_1;\,\,\ldots\,\, ;A_n\\ B_1;\,\ldots
\,;B_m
\end{smallmatrix}\right)_{j+1}&=
(B_m+j)^{-1}\ldots(B_1+j)^{-1}(A_1+j)\ldots(A_n+j)\left(
\begin{smallmatrix} A_1;\,\,\ldots\,\, ;A_n\\ B_1;\,\ldots \,;B_m
\end{smallmatrix} ; z\right)_j.
\end{split}
\end{equation}
\begin{defn}
If $A_1,\ldots,A_n,B_1,\ldots,B_m \in \CC^{r\times r}$ and no
eigenvalue of $B_1,\ldots,B_m$ is in the set
$\{0,-1,-2,\ldots\}$, then we define the funtion
\begin{equation}
\label{hypergeom_generalizada}
{}_n F_m\left( \begin{smallmatrix} A_1;\,\,\ldots\,\, ;A_n\\ B_1;\,\ldots \,;B_m
\end{smallmatrix};z\right)=\sum_{j=0}^\infty \frac{z^j}{j!} \left( \begin{smallmatrix} A_1;\,\,\ldots\,\, ;A_n\\ B_1;\,\ldots \,;B_m
\end{smallmatrix}\right)_j.
\end{equation}
\end{defn}

\begin{thm}
The function $\lw{n}F_m \left( \begin{smallmatrix} A_1;\,\,\ldots\,\, ;A_n\\ B_1;\,\ldots
\,;B_m \end{smallmatrix} ; z\right)$ is analytic for all $z\in \CC$ if $n\leq m$
and it is analytic on $|z|<1$ if $n=m+1$. On the other hand if $n>m+1$ the
series is not absolutely convergent unless $|z|=0$.
\end{thm}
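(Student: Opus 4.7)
The plan is to apply the d'Alembert ratio test to the scalar series
$$\sum_{j\geq 0} \frac{|z|^j}{j!}\left\|\left(\begin{smallmatrix} A_1;\ldots;A_n\\ B_1;\ldots;B_m\end{smallmatrix}\right)_j\right\|,$$
whose convergence is equivalent to absolute convergence of the matrix-valued series in \eqref{hypergeom_generalizada}. By the recursion \eqref{simbologeneralizado} the ratio of consecutive terms is $(|z|/(j+1))$ times the spectral norm of
$$M_j := (B_m+j)^{-1}\cdots(B_1+j)^{-1}(A_1+j)\cdots(A_n+j),$$
so everything reduces to an asymptotic analysis of $\|M_j\|$.

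For the upper bound I would use the triangle inequality $\|A_i+j\|\leq j+\|A_i\|$ and, for $j>\|B_k\|$, expand $(B_k+j)^{-1}=j^{-1}(I+B_k/j)^{-1}$ as a Neumann series to obtain $\|(B_k+j)^{-1}\|\leq (j-\|B_k\|)^{-1}$. Multiplying these estimates gives $\|M_j\|\leq j^{n-m}(1+O(1/j))$, so the ratio of consecutive terms of the scalar series is asymptotic to $|z|\,j^{n-m-1}$. For $n\leq m$ the ratio tends to $0$ and the series converges for all $z\in\CC$; for $n=m+1$ it tends to $|z|$, giving convergence for $|z|<1$ and hence analyticity there.

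For $n>m+1$, I would prove a matching lower bound. Writing $M_j=j^{n-m}(I+R_j)$ with $\|R_j\|=O(1/j)$, for $j$ larger than some $J$ one has $\|R_j\|<1/2$, so $I+R_j$ is invertible with $\|(I+R_j)^{-1}\|\leq 2$ by Neumann series; hence $M_j$ is invertible with $\|M_j^{-1}\|\leq 2\,j^{m-n}$. From the elementary inequality $\|M_jX\|\geq\|X\|/\|M_j^{-1}\|$, iterating the recursion from $J$ yields
$$\left\|\left(\begin{smallmatrix} A_1;\ldots;A_n\\ B_1;\ldots;B_m\end{smallmatrix}\right)_j\right\|\geq c\prod_{k=J}^{j-1}\frac{k^{n-m}}{2}$$
for some $c>0$. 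After dividing by $j!$ and multiplying by $|z|^j$, the $j$-th term is bounded below by a positive multiple of $|z|^j(j!)^{n-m-1}$, which is unbounded for every $z\neq 0$; hence absolute convergence fails.

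The main obstacle is the divergence half: one has to know that $\left(\begin{smallmatrix} A_1;\ldots;A_n\\ B_1;\ldots;B_m\end{smallmatrix}\right)_J\neq 0$ for some $J$ large enough that the Neumann regime has been entered, so that the lower bound is not vacuous. This is granted by the initial condition $\left(\begin{smallmatrix} A_1;\ldots;A_n\\ B_1;\ldots;B_m\end{smallmatrix}\right)_0=I$ together with the observation that once $j>\max_i\|A_i\|$ every factor $A_i+j$ is invertible (and every $B_k+j$ is invertible throughout by the standing spectral hypothesis on the $B_k$), so each $M_j$ from that point on is invertible and cannot annihilate $\left(\begin{smallmatrix} A_1;\ldots;A_n\\ B_1;\ldots;B_m\end{smallmatrix}\right)_j$; thus the norm remains strictly positive and the predicted super-exponential growth pushes the series to diverge.
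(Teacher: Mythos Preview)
Your convergence argument ($n\leq m$ and $n=m+1$) is exactly the paper's: bound the series by $\sum_j \frac{|z|^j}{j!}\bigl\|\bigl(\begin{smallmatrix} A_1;\ldots;A_n\\ B_1;\ldots;B_m\end{smallmatrix}\bigr)_j\bigr\|$, apply the ratio test, and use $M_j=j^{n-m}(I+o(1))$. For the divergence clause ($n>m+1$) you are in fact more careful than the paper, whose displayed inequality is only an \emph{upper} bound $a_{j+1}/a_j\leq \frac{|z|}{j+1}\|M_j\|$; an upper bound on the ratio cannot force the terms to be large, so the paper's last sentence is not really an argument. Your idea of inverting $M_j$ for large $j$, bounding $\|M_j^{-1}\|$, and iterating $\|M_jX\|\geq\|X\|/\|M_j^{-1}\|$ is the correct way to obtain a lower bound.

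There is, however, a real gap in your final paragraph. You need $\bigl(\begin{smallmatrix} A_1;\ldots;A_n\\ B_1;\ldots;B_m\end{smallmatrix}\bigr)_J\neq 0$ for some $J$ already in the Neumann regime, and you justify this by combining $(\cdot)_0=I$ with the invertibility of $M_j$ \emph{once $j>\max_i\|A_i\|$}. That reasoning does not cover the first few steps: for $0\leq j\leq\max_i\|A_i\|$ some factor $A_i+j$ may be singular, and nothing prevents the finite product $M_{J-1}\cdots M_1M_0$ from being the zero matrix. This is not a hypothetical worry: already in the scalar case, if some $a_i\in\{0,-1,-2,\ldots\}$ the series terminates to a polynomial and converges for every $z\in\CC$, contradicting the divergence clause as stated. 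So your lower-bound machinery is sound \emph{once} $(\cdot)_J\neq 0$ is known, but that input has to be either added as a hypothesis (e.g.\ no $-j\in\sigma(A_i)$ for $j\in\NN_0$, or more generally that the symbol never vanishes) or the statement weakened; it does not follow from what you wrote.
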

\begin{proof}
Let $\|\cdot\|$ be any matrix norm on $\CC^{r\times r}$. We observe that
\begin{align*}
\left\|\lw{n}F_m \left( \begin{smallmatrix} A_1;\,\,\ldots\,\, ;A_n\\ B_1;\,\ldots
\,;B_m \end{smallmatrix} ; z\right)\right\|&=\left\| \sum_{j=0}^\infty \left( \begin{smallmatrix} A_1;\,\,\ldots\,\, ;A_n\\
B_1;\ldots ;B_m \end{smallmatrix}\right)_j \frac{z^j}{j!}\right\| \leq
\sum_{j=0}^\infty \left\|\left( \begin{smallmatrix} A_1;\,\,\ldots\,\, ;A_n\\ B_1;\,\ldots
\,;B_m \end{smallmatrix}\right)_j \right\| \frac{|z|^j}{j!}.
\end{align*}
Now we will prove that the power series $\sum_{j=0}^\infty \left\|\left(
\begin{smallmatrix} A_1;\,\,\ldots\,\, ;A_n\\ B_1;\,\ldots
\,;B_m \end{smallmatrix}\right)_j \right\| \frac{|z|^j}{j!}$ is convergent. 
In this way we denote $a_j=\left\|\left(
\begin{smallmatrix} A_1;\,\,\ldots\,\, ;A_n\\ B_1;\,\ldots
\,;B_m \end{smallmatrix}\right)_j \right\|\frac{|z|^j}{j!}$. Then we shall
compute $\lim_{j\rightarrow \infty} \frac{a_{j+1}}{a_j}$. Observe that
\begin{align*}
\frac{a_{j+1}}{a_j}&=\frac{|z|}{j+1}\left\|\left( \begin{smallmatrix} A_1;\,\,\ldots\,\, ;A_n\\ B_1;\,\ldots
\,;B_m \end{smallmatrix}\right)_{j+1}\right\|\left\|\left( \begin{smallmatrix} A_1;\,\,\ldots\,\, ;A_n\\ B_1;\,\ldots
\,;B_m \end{smallmatrix}\right)_j \right\|^{-1}\\
&\leq \frac{|z|}{j+1}\left\|
(B_m+j)^{-1}\ldots(B_1+j)^{-1}(A_1+j)\ldots(A_n+j)\right\|\\
&=\frac{|z|j^{n-m-1}}{\frac{1}{j}+1} \left\|\textstyle{
\left(1+\frac{B_m}{j}\right)^{-1}\ldots\left(1+\frac{B_1}{j}\right)^{-1}\left(1+\frac{A_1}{j}\right)\ldots\left(1+\frac{A_n}{j}\right)} \right\|,
\end{align*}
Provided that $n\leq m$, this expression tends to zero as
$j\rightarrow \infty$ and thus the series is not absolutely convergent
unless $z=0$. If $n>m+1$, the series is convergent for all $z$. On the
other side, for $n=m+1$ the series converges for $|z|<1$.
\end{proof}

It is worth noticing that if $F_0\in \CC^{r}$ then ${}_n F_m\left( \begin{smallmatrix} A_1;\,\,\ldots\,\, ;A_n\\ B_1;\,\ldots \,;B_m
\end{smallmatrix};z\right)F_0$ is a solution of the differential equation
\eqref{hyperg_generalizada2}. In the next theorem we summarize our
results which characterize the analytic solutions at $z=0$ of the
generalized hypergeometric equation
\begin{thm}\label{analiticas}
Let us assume that $(\sigma(B_1)\cup\sigma(B_2)\cup\ldots\cup
\sigma(B_m))\cap(-\NN_0)=\emptyset$ and let $F_0\in \CC^r$ then $F(z)={}_n F_m\left( \begin{smallmatrix} A_1;\,\,\ldots\,\, ;A_n\\ B_1;\,\ldots \,;B_m
\end{smallmatrix};z\right)F_0$ is a solution of the differential equation \eqref{hyperg_generalizada2}
such that $F(0)=F_0$. Conversely any solution $F$ analytic at $z=0$ is of this form.
\end{thm}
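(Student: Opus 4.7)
The plan is to verify both assertions as a direct consequence of the recursion relation \eqref{200} specialized to the analytic root $p=0$ of the indicial equation, exploiting the hypothesis $(\sigma(B_1)\cup\cdots\cup\sigma(B_m))\cap(-\NN_0)=\emptyset$ to make the matrices $B_k+j$ invertible for every integer $j\geq 0$.

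For the forward direction, I would substitute $F(z)={}_nF_m\left(\begin{smallmatrix}A_1;\ldots;A_n\\B_1;\ldots;B_m\end{smallmatrix};z\right)F_0=\sum_{j=0}^\infty F_j z^j$, with $F_j=\tfrac{1}{j!}\left(\begin{smallmatrix}A_1;\ldots;A_n\\B_1;\ldots;B_m\end{smallmatrix}\right)_j F_0$, into \eqref{ecuacioncondeltas}. The definition \eqref{simbologeneralizado} of the symbol gives, upon multiplying both sides by $(j+1)(B_1+j)\cdots(B_m+j)$, the identity
\[
(j+1)(B_1+j)\cdots(B_m+j)F_{j+1}=(A_1+j)\cdots(A_n+j)F_j,
\]
which is precisely the coefficient of $z^{j+1}$ on each side of \eqref{ecuacioncondeltas} with $p=0$. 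Hence $F$ solves the equation. Analyticity on the appropriate domain ($|z|<1$ when $n=m+1$, all of $\CC$ when $n\leq m$) follows from the preceding theorem on convergence of the series, and $F(0)=F_0$ is immediate from $\left(\begin{smallmatrix}A_1;\ldots;A_n\\B_1;\ldots;B_m\end{smallmatrix}\right)_0=I$.

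For the converse, let $F(z)=\sum_{j=0}^\infty G_j z^j$ be an arbitrary solution of \eqref{hyperg_generalizada2} analytic at $z=0$. Applying $\Delta$ term-by-term and substituting into \eqref{ecuacioncondeltas} yields the same recursion
\[
(j+1)(B_1+j)\cdots(B_m+j)G_{j+1}=(A_1+j)\cdots(A_n+j)G_j, \qquad j\geq 0.
\]
Under the standing hypothesis on the spectra of the $B_k$, every factor $B_k+j$ with $j\geq 0$ is invertible, so the recursion determines $G_{j+1}$ uniquely from $G_j$. A straightforward induction then gives $G_j=\tfrac{1}{j!}\left(\begin{smallmatrix}A_1;\ldots;A_n\\B_1;\ldots;B_m\end{smallmatrix}\right)_j G_0$, hence $F(z)={}_nF_m\left(\begin{smallmatrix}A_1;\ldots;A_n\\B_1;\ldots;B_m\end{smallmatrix};z\right)G_0$ with $F_0:=G_0$.

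There is no substantial obstacle: the only delicate point is ensuring that $B_k+j$ is invertible at every stage of the recursion, which is exactly what the assumption $\sigma(B_k)\cap(-\NN_0)=\emptyset$ provides. Everything else reduces to reading off the symbol's recursive definition and invoking the previous convergence theorem.
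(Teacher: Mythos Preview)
Your proposal is correct and follows precisely the approach implicit in the paper: the paper does not give a separate proof of this theorem but presents it as a summary of the preceding discussion, where the recursion \eqref{200} with $p=0$ is derived and the symbol \eqref{simbologeneralizado} is defined so as to satisfy it. Your write-up simply makes explicit the two directions---verifying that the symbol's defining recursion matches \eqref{200} at $p=0$, and using invertibility of $B_k+j$ (guaranteed by the spectral hypothesis) to show uniqueness of the analytic solution given $F(0)$---which is exactly what the paper's setup intends.
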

For any $A\in \CC^{r\times r}$, we shall consider the spectral
norm $\|\bullet\|$ defined by
$$\|A\|=\max \{\sqrt{\lambda}:\lambda\text{ is an eigenvalue of }AA^*\}.$$
Since the spectral norm
is a unitarily invariant matrix norm we observe that $\|XAX^{*}\|=\|A\|$
for any
$A\in \CC^{r\times r}$ and any unitary matrix $X\in \CC^{r\times r}$.
\begin{thm}
\label{convergencia_en_uno}
Let $A_1,\ldots,A_{m+1},B_1,\ldots,B_{m}\in\CC^{r\times r}$ such that $(\sigma(B_1)\cup\sigma(B_2)\cup\ldots\cup
\sigma(B_m))\cap(-\NN_0)$ is empty and let $B_1,\ldots,B_{m}$ be
unitarily equivalent to a diagonalizable matrix with real
eigenvalues. If
$$\delta(B_1)+\ldots+\delta(B_{m})\geq \|A_1\|+\ldots+\|A_{m+1}\|,$$
then the matrix hypergeometric function ${}_{m+1} F_{m}\left(
\begin{smallmatrix} A_1;\,\,\ldots\,\, ;A_{m+1}\\ B_1;\,\ldots \,;B_{m}
\end{smallmatrix};z\right)$ is absolutely convergent for $|z|=1$.
\end{thm}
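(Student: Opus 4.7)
The plan is to establish absolute convergence on $|z|=1$ by a refined ratio-test analysis, in the spirit of Raabe's (or Gauss's) criterion, applied to the series of norms $\sum_{j\ge 0}a_j$ with $a_j=\|(\ldots)_j\|\,|z|^j/j!$, where $(\ldots)_j$ denotes the generalized Pochhammer symbol. The argument in the previous theorem for $|z|<1$ used only that $\lim_{j\to\infty}a_{j+1}/a_j=|z|$; to reach the unit circle one must expand this ratio one order further in $1/j$.

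The decisive input from the hypotheses is an exact formula for $\|(B_i+jI)^{-1}\|$. Writing $B_i=X_iD_iX_i^*$ with $X_i\in\CC^{r\times r}$ unitary and $D_i$ real diagonal, unitary invariance of the spectral norm gives, for $j$ large enough,
\[
\|(B_i+jI)^{-1}\|=\|(D_i+jI)^{-1}\|=\frac{1}{\delta(B_i)+j}.
\]
For the $A$-factors I would use only the triangle inequality $\|A_k+jI\|\le j+\|A_k\|$. Combining these bounds via sub-multiplicativity of the spectral norm and the recursive definition \eqref{simbologeneralizado} of the generalized Pochhammer symbol yields, for $|z|=1$ and $j$ sufficiently large,
\[
\frac{a_{j+1}}{a_j}\le \frac{\prod_{k=1}^{m+1}\bigl(j+\|A_k\|\bigr)}{(j+1)\prod_{i=1}^{m}\bigl(j+\delta(B_i)\bigr)}.
\]

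Both numerator and denominator are monic polynomials of degree $m+1$ in $j$, so a routine asymptotic expansion produces
\[
\frac{a_{j+1}}{a_j}\le 1-\frac{c}{j}+O(j^{-2}),\qquad c=1+\sum_{i=1}^{m}\delta(B_i)-\sum_{k=1}^{m+1}\|A_k\|.
\]
The hypothesis translates exactly into $c\ge 1$, and whenever $c>1$ Raabe's test immediately delivers absolute convergence of $\sum_j a_j$, which completes the proof.

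The delicate point I anticipate is the equality case $c=1$, where Raabe's test is inconclusive and the naive majorant $\prod_k(1-c/k)\sim C/j$ would by itself suggest divergence of the bound. Closing this borderline appears to require either an application of Gauss's test with an explicit estimate of the $O(j^{-2})$ remainder, or a strengthening of the crude triangle inequality $\|A_k+jI\|\le j+\|A_k\|$ that exploits additional structure of the $A_k$. This boundary analysis is, in my view, the technical heart of the theorem.
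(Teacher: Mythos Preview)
Your argument is essentially the paper's, repackaged. The paper iterates the same submultiplicative bound together with the identity $\|(B_i+jI)^{-1}\|=(\delta(B_i)+j)^{-1}$ to obtain the explicit majorant
\[
\left\|\tfrac{1}{k!}\left(\begin{smallmatrix} A_1;\ldots;A_{m+1}\\ B_1;\ldots;B_m\end{smallmatrix}\right)_k\right\|\le \tfrac{1}{k!}\,\prod_{j=1}^{m+1}(\|A_j\|)_k\,\prod_{i=1}^{m}\frac{1}{(\delta(B_i))_k},
\]
and then, instead of invoking Raabe's test on the ratio, it uses the Gamma asymptotic $(a)_k\big/\bigl((k-1)!\,k^{a}\bigr)\to 1/\Gamma(a)$ to show the right-hand side is $o\bigl(k^{-1-\lambda}\bigr)$ with $2\lambda=\sum_i\delta(B_i)-\sum_j\|A_j\|$, concluding by comparison with $\sum k^{-1-\lambda}$. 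Your constant $c$ is exactly $1+2\lambda$, and Raabe applied to the ratio of this same majorant is equivalent to the paper's comparison argument.

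Your concern about the boundary case $c=1$ (equivalently $\lambda=0$) is well placed, and it is not a defect of your approach relative to the paper's: the paper's own proof has the identical gap. It sets the difference equal to $2\lambda$ with ``a positive number $\lambda$'' and appeals to the convergence of $\sum k^{-1-\lambda}$ for $\lambda>0$, so it too only proves the theorem under the strict inequality, despite the ``$\geq$'' in the statement. At equality the common majorant behaves like $C/k$ and diverges, so no sharpening of the comparison will close the gap without additional hypotheses; indeed already in the scalar case ${}_2F_1(a,b;c;z)$ fails to converge absolutely on $|z|=1$ when $\operatorname{Re}(c-a-b)=0$. What you identify as ``the technical heart of the theorem'' is in fact a genuine lacuna shared by the paper.
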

\begin{proof}
Observe that since $B_i$ is diagonalizable, there exists $X_i\in
\CC^{r\times r}$ such that $X_i B_i X_i^{*}=\Lambda_i$ where
$\Lambda_i$ is a diagonal matrix with real no zero entries. Then
we have that
$$\|B_i^{-1}\|=\|X_i B_i^{-1} X_i^{*}\|=\|\Lambda_i\|=\rho(B_i^{-1})=\frac{1}{\delta(B_i)},$$
and
$$\|(B_i+kI)^{-1}\|=\frac{1}{\delta(B_i+kI)}=\frac{1}{\delta(B_i)+k}.$$
By hypothesis, there exists a positive number $\lambda$ such that
$$\sum_{i=1}^{m} \delta(B_i)-\sum_{i=1}^{m+1}\|A_i\|=2\lambda,$$ 
and we have that
\begin{align*}
\left\| \frac{|z|^{k}}{k!}  \left(
\begin{smallmatrix} A_1;\,\,\ldots\,\, ;A_{m+1}\\ B_1;\,\ldots \,;B_{m}
\end{smallmatrix}\right)_k  \right\|  &\leq \frac{|z|^{k}}{k!}
\prod_{i=1}^{m}\left(\prod_{h=0}^{k-1}\|(B_i+hI)^{-1}\|\right) \, \prod_{j=1}^{m+1} (\|A_j\|)_k
\\
&= \frac{|z|^{k}}{k!} \prod_{i=1}^{m} \frac{1}{(\delta(B_i))_k} \,
\prod_{j=1}^{m+1} (\|A_j\|)_k \\
&\leq |z|^{k} \, k^{\sum_{i=1}^{m} \delta(B_i)-\sum_{i=1}^{m+1}\|A_i\|-1} \,
\prod_{i=1}^{m}\left( \frac{(k-1)!\,
k^{\delta(B_i)}}{(\delta(B_i))_k} \right) \\
& \hspace{3.9cm} \times \prod_{i=1}^{m+1}\left(
\frac{(\|A_i\|)_k}{(k-1)!\, k^{\|A_i\|}} \right).
\end{align*}
If we set $|z|=1$, then
\begin{align*}
\lim_{k\rightarrow \infty} k^{1+\lambda} \left\| \frac{1}{k!}
\left(
\begin{smallmatrix} A_1;\,\,\ldots\,\, ;A_{m+1}\\ B_1;\,\ldots \,;B_{m}
\end{smallmatrix}\right)_k  \right\|& \leq  \lim_{k\rightarrow \infty} k^{-\lambda} \,
\prod_{i=1}^{m}\left( \frac{(k-1)!\,
k^{\delta(B_i)}}{(\delta(B_i))_k} \right) \\
& \hspace{1cm} \times\prod_{i=1}^{m+1}\left(
\frac{(\|A_i\|)_k}{(k-1)!\, k^{\|A_i\|}} \right) \displaybreak[0] \\ &= 0 \cdot
\prod_{i=1}^{m} \Gamma(\delta(B_i)) \prod_{j=1}^{m+1} \Gamma(\|A_j
\|)^{-1}=0.
\end{align*}
Since the series $\sum_{k=1}^{\infty}\frac{1}{k^{1+\lambda}}$
converge if $\lambda >0$, the Dirichlet's criterium of numerical
series of positive numbers implies the absolute convergence on the unit circle
for $|z|=1$ of the matrix hypergeometric function ${}_{m+1} F_{m}\left(
\begin{smallmatrix} A_1;\,\,\ldots\,\, ;A_{m+1}\\ B_1;\,\ldots \,;B_{m}
\end{smallmatrix};z\right)$.
\end{proof}
Now we will concern ourselves with the non analytic solutions of the
matrix equation \eqref{hyperg_generalizada2}. By taking into account the
recursion relation \eqref{200} we introduce the following definition.
\begin{defn} Let us
assume that $A_1,\ldots,A_n,B_1,\ldots,B_m$ are $r\times r$ complex matrices
and suppose that $-p\notin
(\sigma(B_1)+\NN_0)\cup\ldots\cup(\sigma(B_m)+\NN_0)\cup\NN$.
Then we define the function
\begin{equation}
\label{hypergeom_generalizada_conp} {}_n F_m^p\left(
\begin{smallmatrix} A_1;\,\,\ldots\,\, ;A_n\\ B_1;\,\ldots \,;B_m
\end{smallmatrix};z\right)=\sum_{j=0}^\infty \frac{z^j}{(p+1)_j} \left( \begin{smallmatrix} A_1+p;\,\,\ldots\,\, ;A_n+p\\ B_1+p;\,\ldots \,;B_m+p
\end{smallmatrix}\right)_j.
\end{equation}
Observe that $p=0$ implies that ${}_n F_m^p\left(
\begin{smallmatrix} A_1;\,\,\ldots\,\, ;A_n\\ B_1;\,\ldots \,;B_m
\end{smallmatrix};z\right)={}_n F_m\left( \begin{smallmatrix} A_1;\,\,\ldots\,\, ;A_n\\ B_1;\,\ldots \,;B_m
\end{smallmatrix};z\right)$.
\end{defn}
If $\beta \in
\sigma(B_1)\cup \dots \cup \sigma(B_m)$ and $\beta
\neq 1$ then $p=1-\beta$ is a nonzero solution of the indicial
equation 
\begin{equation*}
p^r\det(B_1+p-1)\det(B_2+p-1)\ldots\det(B_m+p-1)=0.
\end{equation*}
Thus the kernel of the matrix
$p(B_1+p-1)\ldots(B_m+p-1)$ is nonempty and we can take a nonzero vector
$F_{\beta} \in \ker(p(B_1+p-1)\ldots(B_m+p-1))$. Then, because of the way it was
constructed, the $\CC^{r}$-valued function 
$$F(z)=z^{1-\beta}{}_n
F_m^{1-\beta}\left( \begin{smallmatrix} A_1;\,\,\ldots\,\, ;A_n\\ B_1;\,\ldots \,;B_m 
\end{smallmatrix};z\right)F_{\beta},$$ 
is a solution of the matrix generalized hypergeometric equation \eqref{hyperg_generalizada2}. We ressume
this fact in the following theorem.
\begin{thm}\label{noanaliticas} Let $\beta \in
\sigma(B_1)\cup \dots \cup \sigma(B_m)$ and $\beta
\neq 1$. If $\beta=1-p$ and we assume that $\beta\not\in
(\sigma(B_1)+\NN)\cup\ldots\cup(\sigma(B_m)+\NN)\cup \NN$ and $F_{\beta}$ is a
vector in $\ker(p(B_1+p-1)\ldots(B_m+p-1))$ then
$$F(z)= z^{1-\beta}{}_n F_m^{1-\beta}\left( \begin{smallmatrix} A_1;\,\,\ldots\,\, ;A_n\\ B_1;\,\ldots \,;B_m
\end{smallmatrix};z\right)F_{\beta}=\sum_{j=0}^\infty \frac{z^{p+j}}{(p+1)_j}
\left( \begin{smallmatrix} A_1+p;\,\,\ldots\,\, ;A_n+p\\ B_1+p;\,\ldots
\,;B_m+p  \end{smallmatrix}\right)_j F_{\beta},$$
is a solution of the generalized hypergeometric equation \eqref{hyperg_generalizada2}.
\end{thm}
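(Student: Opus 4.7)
The plan is to verify directly that the formal series $F(z)=z^{p}\sum_{j\ge 0}F_j z^{j}$ with
$$F_j=\frac{1}{(p+1)_j}\left( \begin{smallmatrix} A_1+p;\,\,\ldots\,\, ;A_n+p\\ B_1+p;\,\ldots \,;B_m+p\end{smallmatrix}\right)_j F_{\beta}$$
satisfies the recursion \eqref{200} for all $j\ge -1$, and then to invoke the convergence argument of the previous theorem, applied to the shifted parameters, to conclude that $F$ is an actual (not merely formal) solution of \eqref{hyperg_generalizada2}.

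First I would check that the coefficients $F_j$ are well defined. The hypothesis $\beta\notin\NN$ makes $p+j+1\ne 0$ for every $j\ge 0$, and the hypothesis $\beta\notin(\sigma(B_1)+\NN)\cup\ldots\cup(\sigma(B_m)+\NN)$ makes each matrix $B_i+p+j$ invertible for every $j\ge 0$. Together these conditions are exactly what is needed for the recursive definition of the generalized symbol in \eqref{simbologeneralizado}, with $A_i$ replaced by $A_i+p$ and $B_i$ replaced by $B_i+p$, to make sense.

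Next comes the main algebraic step. For $j\ge 0$, using $(p+1)_{j+1}=(p+j+1)(p+1)_j$ together with the recursive rule \eqref{simbologeneralizado} for the shifted symbol, a direct computation yields
$$(p+j+1)(B_1+p+j)\ldots(B_m+p+j)\,F_{j+1}=(A_1+p+j)\ldots(A_n+p+j)\,F_j,$$
which is precisely the recurrence \eqref{200} at the exponent $p$. It remains to treat the indicial case $j=-1$: the left-hand side of \eqref{200} then reads $p(B_1+p-1)\ldots(B_m+p-1)F_0$, and since $F_0=F_{\beta}$ was chosen in the kernel of that very operator, this vanishes. Hence \eqref{200} holds for all $j\ge -1$.

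Finally I would justify convergence and term-by-term differentiation. Since $F_j$ differs from the generalized Pochhammer symbol $\left( \begin{smallmatrix} A_1+p;\,\ldots;A_n+p\\ B_1+p;\,\ldots;B_m+p\end{smallmatrix}\right)_j F_{\beta}$ only by the scalar factor $(p+1)_j^{-1}$, whose ratio of consecutive terms tends to $1$, the ratio-test estimate carried out in the analytic case applies verbatim and gives convergence for $|z|<1$ when $n=m+1$ (and on all of $\CC$ when $n\le m$). This legitimates the formal manipulation by which the recursion \eqref{200} was derived, so $F(z)$ does solve \eqref{hyperg_generalizada2}. There is no serious obstacle in the argument; the only point that deserves care is matching the spectral hypotheses on $\beta$ to the invertibilities demanded by the shifted recursion, which is the role of the conditions placed on $\beta$ in the statement.
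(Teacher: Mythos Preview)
Your proposal is correct and follows essentially the same route as the paper: the paper derives the recursion \eqref{200} from the ansatz $F(z)=z^{p}\sum_{j\ge 0}F_j z^{j}$, observes that the choice of $F_\beta$ in the kernel kills the $j=-1$ term, and then states the theorem with no further proof beyond ``because of the way it was constructed.'' You have simply spelled out the verification (well-definedness of the shifted symbol, the $j\ge 0$ recursion, the indicial step, and convergence) in more detail than the paper does.
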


\begin{remark}
In the case that $n=m+1$, if the matrices $B_i$, $i=1,\ldots,m$, are all
diagonalizable, and the set of eigenvalues $\{\beta_i^j\}$ is such that
$\beta_i^j\neq \beta_k^h$ if $(i,j)\neq (k,h)$, the set of all analytic
solutions
of \eqref{hyperg_generalizada2} given in Theorem \ref{analiticas} and the set of
all non-analytic solutions given in Theorem \ref{noanaliticas} forms a
fundamental set of solutions of the hypergeometric equation
\eqref{hyperg_generalizada2}.
\end{remark}

\section{An example from the representation theory}

The importance of the study of the matrix valued hypergeometric equation, and its
solutions, become evident at the light of its connections with the theory of
spherical functions on any $K$-type on a Lie group and with the matrix valued
orthogonal polynomials (See \cite{RT,PT1,PR}). However the hypergeometric
equation that appear in \cite{RT} is slightly different from equation
\eqref{hipergeometrica_Tirao} and has the form
\begin{equation}
\label{hipergeometrica_esfericas2}
u(1-u)\frac{d^2F(u)}{du^2}+(C-uU)\frac{dF(u)}{du}-VF(u)=0,
\end{equation}
for $r\times r$ complex matrices $C, U$ and $V$ and a $\CC^r$-valued function
$F$ on $\CC$ . This differential equation cannot
always be reduced to a hypergeometric equation
\begin{align}
\label{hipergeometrica_tirao2}
z(1-z)F''(z)+(C-z(A+B+1))F'(z)-ABF(z)=0,
\end{align}
because we need to be able to find two matrices $A$ and $B$ which are solutions of the equations $U=A+B+1$, and $V=AB$,
or equivalently we need to solve the matrix quadratic equation $B^2+(1-U)B+V=0$.
This is not a trivial fact because a matrix quadratic equation may not have any solution. For
example there is not any $2\times2$ complex matrix $X$ such that 
$$X^2 -\left(\begin{matrix} 0&1\\ 0& 0 \end{matrix}\right)=0 .$$
Now we will concern ourselves in a particular case where it is posible to
transform the equation \eqref{hipergeometrica_esfericas2} into the
hypergeometric equation \eqref{hipergeometrica_tirao2}. Let
$Q(\lambda)=\lambda^2+\lambda(1-U)+V$. Then $\det(Q(\lambda))$ is a the
polynomial in $\lambda$ of degree $2r$ and therefore there are exactly $2r$ solutions
$\{\lambda_1,\ldots,\lambda_{2r}\}$ of the equation $\det(Q(\lambda))=0$. For
each $\lambda_i$, we have that the matrix $Q(\lambda_i)$ is singular. Let us
assume that we can choose $r$ different eigenvalues
$\lambda_{i_1},\ldots,\lambda_{i_r}$ and $r$ linearly independent vectors
$p_{i_1},\ldots,p_{i_r}$ such that $p_{i_k}\in \ker Q(\lambda_{i_k})$.
Let $\Lambda$ be the diagonal matrix whose $j$-th diagonal element is given by
$\lambda_{i_j}$ and let $X$ be the $r\times r$ matrix whose $j$-th
column is the vector $p_{i_j}$. Now we observe that
$$X\Lambda^2+(1-U)X\Lambda-VX\Lambda=0.$$
If we multiply the previous equation by $X^{-1}$ on the right, we obtain that
$$(X\Lambda X^{-1})^{2}+(1-U)X\Lambda X^{-1}-VX\Lambda X^{-1}=0,$$
and therefore $B=X\Lambda X^{-1}$ is a solution of the matrix quadratic equation
$B^2+(1-U)B+V=0$. We resume this discussion in the following theorem.
\begin{thm}
\label{encontrarAyB}
Let $C,U,V\in\CC^{r\times r}$ be such that we can choose $r$
different eigenvalues $\lambda_{i_1},\ldots,\lambda_{i_r}$ and $r$ linearly independent vectors
$p_{i_1},\ldots,p_{i_r}$ such that $p_{i_k}\in \ker Q(\lambda_{i_k})$. Let
$\Lambda=\operatorname{diag}(\lambda_{i_j})$ and let
$X=[p_{i_1},\ldots,p_{i_r}]$, i.e. that the $j$-th column of $X$ is $p_{i_j}$.
Then if we let
$$B=X\Lambda X^{-1},\quad A=U-X\Lambda X^{-1}-1,$$
we have that the differential equation \eqref{hipergeometrica_esfericas2} is the
following hypergeometric equation
$$z(1-z)F''(z)+(C-z(A+B+1))F'(z)-ABF(z)=0.$$
\end{thm}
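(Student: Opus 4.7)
The plan is to follow the construction sketched in the discussion preceding the theorem: verify that $B = X\Lambda X^{-1}$ satisfies the matrix quadratic equation $B^2 + (1-U)B + V = 0$, then deduce the two identities $A + B + 1 = U$ and $AB = V$, and finally observe that substituting these into the target hypergeometric equation reproduces \eqref{hipergeometrica_esfericas2}.

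The first step is the substantive one. The hypothesis $p_{i_k} \in \ker Q(\lambda_{i_k})$ gives, for each $k$, the vector identity $\lambda_{i_k}^2 p_{i_k} + (1-U)\lambda_{i_k} p_{i_k} + V p_{i_k} = 0$. I would assemble these $r$ identities into a single matrix equation $X\Lambda^2 + (1-U)X\Lambda + VX = 0$, with $X = [p_{i_1},\ldots,p_{i_r}]$ and $\Lambda = \operatorname{diag}(\lambda_{i_j})$. Multiplying on the right by $X^{-1}$, which exists because the $p_{i_k}$ are linearly independent, and using $(X\Lambda X^{-1})^2 = X\Lambda^2 X^{-1}$, yields the quadratic relation $B^2 + (1-U)B + V = 0$.

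Once the quadratic relation is in hand, the second step is purely algebraic. By the definition $A = U - B - 1$, one has $A + B + 1 = U$ immediately. Rearranging the quadratic as $V = UB - B^2 - B = (U - B - 1)B$ gives $AB = V$. Substituting $A + B + 1 = U$ and $AB = V$ into $z(1-z)F''(z) + (C - z(A+B+1))F'(z) - ABF(z) = 0$ then reproduces \eqref{hipergeometrica_esfericas2} (with the variable $u$ renamed $z$).

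There is no real obstacle: the argument is just the column-stacking trick already sketched before the theorem, made precise. The genuine content — the existence of $r$ linearly independent kernel vectors attached to $r$ distinct roots of $\det Q(\lambda) = 0$ — is assumed in the statement and is exactly what makes $X$ invertible and $\Lambda$ well defined.
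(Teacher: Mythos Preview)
Your proposal is correct and follows essentially the same approach as the paper: the column-stacking of the kernel conditions into $X\Lambda^2+(1-U)X\Lambda+VX=0$, right-multiplication by $X^{-1}$, and the resulting quadratic relation for $B$ are exactly the argument given in the discussion preceding the theorem. Your write-up is in fact a bit more careful than the paper's, which carries what appear to be typos in the assembled matrix identity; you also make explicit the derivation $V=(U-1-B)B=AB$, which the paper leaves implicit.
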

\noindent {\it Example:} Let us consider the differential equation
\begin{equation}
\label{hipergeometrica_esfericas}
u(1-u)\frac{d^2F(u)}{du^2}+(C-uU)\frac{dF(u)}{du}-VF(u)=0,
\end{equation}
where the coefficient matrices are
\begin{align*}
C& =\sum_{i=0}^\ell (\beta+1+2i)E_{ii}+\sum_{i=1}^\ell iE_{i,i-1},
\quad
U=\sum_{i=0}^\ell (\alpha+\beta+\ell+i+2) E_{ii} , \displaybreak[0]\\
V&=  \sum_{i=0}^\ell i(\alpha+\beta+i-k+1)E_{ii}-
  \sum_{i=0}^{\ell-1} (\ell-i)(i+\beta-k+1)E_{i,i+1},
\end{align*}
with $\alpha,\beta >-1$, $0<k<\beta+1$ and  $\ell\in \NN$.

If we replace $\alpha$ by $m\in\ZZ_{\geq 0}$ and $\beta$ by $n-1$, this
differential equation arises from the first few steps in the
explicit determination of all matrix valued spherical functions associated to
the $n$-dimensional projective space $P_n(\CC)=\SU(n+1)/\U(n)$. Furthermore,
together whith an appropiate choice of a matrix weight $W(u)$, it provides
examples of families of Jacobi type matrix valued orthogonal polynomials (see
\cite{PT1} and \cite{PR}).

The goal is to use the content of Theorem \ref{encontrarAyB} to find matrices $A$ and $B$ 
such that $U=A+B+1$ and $V=AB$. First at all we observe that the matrix
$Q(\lambda)=\lambda^2+\lambda(1-U)+V$ is an upper triangular matrix whose
$i$-th diagonal entry is given by 
$$\lambda^2-\lambda(\alpha+\beta+\ell+i+1)+i(\alpha+\beta+i-k+1).$$
Therefore we have that $\det( Q(\lambda)) = \prod_{i=0}^\ell
(\lambda^2-\lambda(\alpha+\beta+\ell+i+1)+i(\alpha+\beta+i-k+1))$. Then we
can choose $\lambda_i$ as a solution of the cuadratic equation
$\lambda^2-\lambda(\alpha+\beta+\ell+i+1)+i(\alpha+\beta+i-k+1)=0$ and
a nonzero $v_i\in \ker Q(\lambda_i)$ for each $0\leq i\leq \ell$. For simplicity
we shall consider only the case in which $\lambda_i\neq\lambda_j$ for $i\neq
j$. Observe that, since $Q(\lambda)$ is an upper triangular matrix,
$\{v_i\}_{i=0}^\ell$ is a linearly independent set of vectors. Then if we set
$X=[v_{1},\ldots,v_\ell]$ and $\Lambda$ as the diagonal matrix whose $j$-th
diagonal element is given by  $\Lambda_{jj}=\lambda_j$, we have 
that the differential equation \eqref{hipergeometrica_esfericas} become equal to
$$z(1-z)F''(z)+(C-z(A+B+1))F'(z)-ABF(z)=0,$$
where $A=U-X\Lambda X^{-1}-1$ and $B=X\Lambda X^{-1}$.

\bibliography{database}

\end{document}